\newtheorem{theorem}{\bf Theorem}
\newtheorem{lemma}[theorem]{\bf Lemma}
\newcounter{definition}
\newcommand{\dd}{\text{d}}
\long\def\symbolfootnote[#1]#2{\begingroup
\def\thefootnote{\fnsymbol{footnote}}\footnote[#1]{#2}\endgroup}
\begin{document}
\title{Optimal Slotted ALOHA under Delivery Deadline Constraint for Multiple-Packet Reception}

\author{Yijin Zhang, Yuan-Hsun Lo, Feng Shu, and Jun Li
\IEEEcompsocitemizethanks{\IEEEcompsocthanksitem Y. Zhang, F. Shu and J. Li are with the School of Electronic and Optical Engineering, Nanjing University of Science and Technology, Nanjing, China, and also with National Mobile Communications Research Laboratory, Southeast University, Nanjing, China. E-mails: yijin.zhang@gmail.com; \{shufeng, jun.li\}@njust.edu.cn.}
\IEEEcompsocitemizethanks{\IEEEcompsocthanksitem Y.-H. Lo is with the School of Mathematical Sciences, Xiamen University, Xiamen, China.
E-mail: yhlo0830@gmail.com.}
}

\IEEEtitleabstractindextext{
\begin{abstract}
This paper considers the slotted ALOHA protocol in a communication channel shared by $N$ users.
It is assumed that the channel has the multiple-packet reception (MPR) capability that allows the correct reception of up to $M$ ($1 \leq M < N$) time-overlapping packets.
To evaluate the reliability in the scenario that a packet needs to be transmitted within a strict delivery deadline $D$ ($D\geq 1$) (in unit of slot) since
its arrival at the head of queue, we consider the successful delivery probability (SDP) of a packet as performance metric of interest.
We derive the optimal transmission probability that maximizes the SDP for any $1\leq M<N$ and any $D\geq1$, and show it can be computed by a fixed-point iteration.
In particular, the case for $D=1$ (i.e., throughput maximization) is first completely addressed in this paper.
%Furthermore, by noting that maximizing the SDP for $D>1$ would degrade the throughput performance, we obtain the optimal transmission probability subject to throughput constraint.
Based on these theoretical results, for real-life scenarios where $N$ may be unknown and changing, we develop a distributed algorithm that enables each user to tune its transmission probability at runtime according to the estimate of $N$.
Simulation results show that the proposed algorithm is effective in dynamic scenarios, with near-optimal performance.
\end{abstract}
\begin{IEEEkeywords}
Slotted ALOHA, multiple-packet reception, optimal transmission probability, successful delivery probability
\end{IEEEkeywords}}

\maketitle

\IEEEdisplaynontitleabstractindextext

\IEEEpeerreviewmaketitle

\section{Introduction}
\subsection{Motivation}
Since Abramson's seminal work~\cite{Abramson70} in 1970, ALOHA-type protocols have been widely used for initial terminal access or
short packet transmissions in a variety of distributed wireless communication systems due to their simplicity.
 %in which the carrier sensing mechanism is difficult to implement, such as underwater sensor networks, satellite networks and vehicular networks.
There were extensive studies on the slotted ALOHA under the traditional model of a {\em single-packet reception} (SPR) channel: a packet can be correctly received if there is no other packet transmission during its transmission.
However, the SPR has become restrictive due to the advent of multiple-packet reception (MPR) techniques that allow the correct reception of time-overlapping packets.
%It is believed that the slotted ALOHA over an MPR channel will behave differently from that over a conventional SPR channel, and furthermore,
Hence, there is a natural interest in gaining a clear insight into the fundamental impact of MPR on the behavior of the slotted ALOHA protocol.

Differently from previous studies that dealt with the stability, throughput or delay issue of slotted ALOHA under MPR~\cite{Ghez88,Ghez89,Naware05,Luo06,Gau06,Gau08,ZZL09,ZZL10,BaeMPR14}, we in this paper concentrate on achieving maximum reliability in the scenario that a packet needs to be transmitted within a strict delivery deadline $D$ (in unit of slot) since its arrival at the head of queue.
%and measure the successful delivery probability of a packet is considered as performance metric of interest.
Such a scenario can be safety message dissemination in vehicular networks~\cite{Stanica12} or machine to-machine communications in Internet of Things~\cite{IoT16}.
Some recent work on a similar issue can be found in~\cite{Bae13,Bae15J,Bae15O,Vinel12,Campolo12} for an SPR channel, and~\cite{Birk09,Birk99,Baron02b} for a multichannel system.

\subsection{Contribution}
Along the lines of~\cite{Gau06,Gau08,ZZL09,ZZL10,BaeMPR14}, this paper considers a specific MPR channel, namely the $M$-user MPR channel, in which up to $M$ time-overlapping packets can be received correctly.
Our key contributions are summarized as follows:
\begin{itemize}
  \item We derive the optimal transmission probability for the reliability maximization when $N$ users contend for the channel access.
Our work can be seen as a generalization of the work in~\cite{Bae13} that only focused on the SPR channel.
Moreover, we show that the optimal transmission probability can be computed by a fixed-point iteration.
%Moreover, by noting that there may be a conflict between reliability maximization and throughput maximization, we further obtain the optimal transmission
%probability that maximizes reliability subject to throughput constraint.
  \item As explained in Section 3, the saturation throughput maximization of finite-user slotted ALOHA~\cite{ZZL09,BaeMPR14} can be studied as a special case $D=1$ of reliability maximization that we investigate here.
It should be pointed out that Bae et al.~\cite{BaeMPR14} obtained the optimal transmission probability for saturation throughput maximization necessarily relying on an unproved technical condition $\frac{\dd}{\dd\tau} \frac{\mathbb{E}[X^2]}{\mathbb{E}[X]} < N-1$ ($X$ is the number of users involved in each successful transmission and $\tau$ is the transmission probability).
In this paper, we present analysis to prove that this technical condition always holds for an arbitrary $1 \leq M<N$.
Hence, the issue of saturation throughput maximization under an $M$-user MPR channel is first completely addressed in this paper.
  \item Clearly, deriving the optimal transmission probability requires priori knowledge of $N$.
To achieve maximum reliability in real-life scenarios where $N$ is unknown and changing over time, built on the theoretical results derived
above, we propose a tuning algorithm that allows each user to estimate $N$ without requiring any access parameter input, and adjust its transmission probability accordingly at runtime.
Through an extensive performance study we show that the tuning algorithm is effective in a variety of dynamic network configurations considered in the paper, with near-optimal performance.
\end{itemize}

\subsection{Related work}
The first attempt to study slotted ALOHA under MPR was made by Ghez et al.~\cite{Ghez88,Ghez89}, in which they proposed the symmetric MPR channel model and analyzed stability properties under an infinite-user assumption.
%In the last decade, many works on slotted ALOHA with MPR emerged.
Naware et al.~\cite{Naware05} extended the stability study to finite-user systems without posing any limitation on the MPR model, and in addition investigated the average delay in capture channels.
Luo et al.~\cite{Luo06} further established the throughput and stability regions for finite population over a standard MPR channel in which simultaneous packet transmissions are not helpful for the reception of any particular group of packets.

After the aforementioned studies for various generalized MPR channels, the throughput performance of slotted ALOHA over an $M$-user MPR channel has received much attention recently.
Gau~\cite{Gau06,Gau08} derived the saturation and non-saturation throughput for finite-user cases.
%Dua~\cite{Dua08} continued the work along but assumed the buffer size at each user is infinity.
To demonstrate the capacity-enhancement, Zhang et al. in~\cite{ZZL09} proved that the maximum achievable throughput increases superlinearly with $M$ for both finite-user case with saturation traffic and infinite-user case with random traffic, and in~\cite{ZZL10} further showed that superlinear scaling also holds under bounded delay-moment requirements.
Following~\cite{ZZL09}, to fully utilize the $M$-user MPR channel, Bae et al.~\cite{BaeMPR14} derived the optimal transmission probability that maximizes the saturation throughput in the finite-user case under some unproved technical conditions.
To the best of our knowledge, little work has been done to investigate the reliability issue over an $M$-user MPR channel.
Our paper here is an attempt along this direction.

Under unknown and time-varying operating conditions, developing an estimation algorithm to acquire knowledge of the number of users $N$ is of significant importance.
Many approaches have been proposed for the SPR channel.
The method in~\cite{Rivest87} estimates $N$ based on the channel state in the previous slot, and the schemes in~\cite{Hajek82,Wu13} estimate $N$ with statistics of consecutive idle and collision slots.
However, all of them require that $N$ follows a Poisson distribution.
To remove the assumption on the distribution of $N$, the algorithm in~\cite{Bianchi03} applies an ARMA filter to estimate $N$ relying on the measured collision probability, the algorithm in~\cite{Cali00a} estimates $N$ by the number of idle slots between two successful transmissions, and the algorithm in~\cite{Cali00b} estimates $N$ from the knowledge of number of consecutive idle slots.
The extension of the estimation algorithm to the MPR case is rarely reported.
We are aware of only one previously proposed algorithm in~\cite{BaeMPR14} that estimates $N$ according to the collected information on the number of users involved in each successful transmission.
However, we find it is ineffective in some dynamic scenarios, which will be shown in Section 5.
%Apart from the approach in~\cite{Bianchi03}, other mentioned algorithms all need the value of the adopted transmission probability to estimate $N$.

The remainder of this paper is organized as follows.
In Section 2, we describe the considered system model.
In Section 3, we derive the optimal transmission probability that maximizes the reliability for any $1\leq M<N$ and any $D\geq1$, and show it can be computed by a fixed-point iteration.
In section 4, we propose a tuning algorithm by which each user can achieve a reliability level close to the theoretical limit at runtime.
In Section 5, simulation results verify the accuracy of our analytical results and the effectiveness of the proposed tuning algorithm.
Finally, Section 6 concludes this paper.

\section{System model}
As adopted in~\cite{Bae13,ZZL09,BaeMPR14}, we develop our analytical model based on the following assumptions:
\begin{enumerate}
  \item There are $N$ ($N\geq 2$) users with saturated traffic in the network, and all of them are within the transmission range of each other.
  \item The system is limited by user interference and the channel has an $M$-MPR capability, which means a packet can be correctly decoded by the receiver if at most $M-1$ other packet transmissions overlap with it at any time, and is unrecoverable otherwise. To avoid some trivial cases, we assume $1\leq M < N$. Specially, $M=1$ corresponds to the SPR channel.
  \item The channel time is divided into time slots of an equal length, and every packet exactly occupies the duration of one time slot.
  \item Each user knows the slot boundaries, and attempts to transmit a packet with a common transmission probability $\tau$ at the beginning of a time slot, $0 \leq \tau \leq 1$.
  \item Every packet is neither acknowledged nor retransmitted, since that an acknowledgement mechanism would incur extra overhead, waiting time and energy cost for short packets, and meanwhile, for some periodic traffic, the content can simply be replaced, and hence there is no need to retransmit an outdated packet.
  \item Every packet should be delivered within a strict delivery deadline $D$ ($D \geq 1$) (in unit of slot), which is defined as the duration from the moment of its arrival at the head of the queue to the completion of its transmission.
\end{enumerate}

\section{Optimal Transmission Probability}
Given any real number $\tau\in[0,1]$ and integer $D\geq 1$, let $P_D(\tau)$, called \emph{successful delivery probability} (SDP), be the probability that a packet will be successfully received within the delivery deadline $D$ under the common transmission probability $\tau$.
Consider a tagged user.
Let $Y$ denote the number of packets transmitted by the other $N-1$ users in a time slot.
It is easy to see $Y$ follows a binomial distribution with parameters $N-1$ and $\tau$, and then for $i=0,1,\ldots,N-1$, we have
\begin{equation}
\mathbb{P} (Y=i) = \binom{N-1}{i}  \tau^i (1-\tau)^{N-1-i}.
\label{eq:Y}
\end{equation}
Furthermore, the value $P_{D}(\tau)$ can be obtained as:
\begin{align}
P_{D}(\tau) &=\sum_{k=1}^D \tau (1-\tau)^{k-1} \mathbb{P} (Y \leq M-1) \notag \\
&=\sum_{k=1}^D \tau (1-\tau)^{k-1}  \sum_{i=0}^{M-1}  \binom{N-1}{i}  \tau^i (1-\tau)^{N-1-i} \notag \\
&= \big(1-(1-\tau)^D\big)  \sum_{i=0}^{M-1}  \binom{N-1}{i}  \tau^i (1-\tau)^{N-1-i}. \label{eq:P(tau)}
\end{align}

In this section, we aim to obtain the optimal transmission probability for maximizing $P_{D}(\tau)$.

For a given integer $D\geq 1$, let $\mathcal{P}^{max}_{D}$ denote the maximum SDP going through all possible $\tau\in[0,1]$, that is, $$\mathcal{P}^{max}_{D}:=\max_{\tau\in[0,1]}P_{D}(\tau).$$
Then, define the \emph{optimal transmission probability}, denoted by $\tau^{opt}_{D}$, to be the transmission probability such that the SDP achieves $\mathcal{P}^{max}_{D}$, i.e.,
\[
\tau^{opt}_D:=\arg\max_{\tau \in [0,1]} P_D(\tau).
\]
Note that $\tau^{opt}_{D}$ may not be unique by definition.

%for the maximization of $P_{D}(\tau)$ can be obtained by
%\[
%\tau^{opt}_{D}= \arg \max_{\tau \in (0,1)} P_{D}(\tau).
%\]

{\em Remark 1:} As $P_{1}(\tau)$ refers to the individual saturation throughput defined as the time average of the number of packets successfully transmitted by a user provided that all users have saturated traffic, $\tau^{opt}_{1}$ is indeed the optimal transmission probability maximizing the saturation throughput under MPR, which has been investigated in~\cite{BaeMPR14}.

{\em Remark 2:} When $M=1$, $\tau^{opt}_{D}$ is the optimal transmission probability maximizing the SDP within the delivery deadline $D$ under SPR, which has been derived in~\cite{Bae13}.

It is easy to see from \eqref{eq:P(tau)} that, when $D$ is fixed, $P_D(\tau)$ is a continuous function of $\tau$ on the closed interval $[0,1]$.
Hence, by The Extreme Value Theorem, $\tau^{opt}_{D}$ exists.
In the remainder of this subsection, we shall show the uniqueness of $\tau^{opt}_{D}$, and present how to obtain it.
%To prove the uniqueness of $\tau^{opt}_{D}$, we shall first study all possible $\tau^{opt}_{D}$'s by finding all local maximums of $P_D(\tau)$ on $(0,1)$, and then show the uniqueness of the local maximum.

Define the following semi open interval
\[
\mathcal{I}:=\Big[1-(\frac{N-1}{N-1+D})^\frac{1}{D}, 1\Big).
\]
We first provide some properties of $\tau^{opt}_{D}$.

\begin{lemma}\label{lemma:optimal-exist}
For any integers $D\geq 1$ and $1\leq M<N$,
\begin{enumerate}
  \item $\tau^{opt}_{D}$ is a solution of $\frac{d}{d\tau}P_{D}(\tau)=0$, and
  \item $\tau^{opt}_{D}$ must lie in $\mathcal{I}$.
\end{enumerate}
\end{lemma}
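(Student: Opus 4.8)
The plan is to treat $P_D$ as a product $P_D(\tau)=A(\tau)B(\tau)$, where $A(\tau):=1-(1-\tau)^D$ is the probability that the tagged user transmits before the deadline and $B(\tau):=\sum_{i=0}^{M-1}\binom{N-1}{i}\tau^i(1-\tau)^{N-1-i}=\mathbb{P}(Y\le M-1)$ is the probability of correct reception. On $(0,1)$ the factor $A$ is strictly increasing while $B$ is strictly decreasing, reflecting the tension between transmitting often and avoiding excessive interference, and both factors are strictly positive there.

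For part (i), I would first evaluate the endpoints: $A(0)=0$ forces $P_D(0)=0$, and since $M<N$ every surviving term of $B(1)$ carries a factor $0^{N-1-i}$ with $N-1-i>0$, so $B(1)=0$ and hence $P_D(1)=0$. On the other hand $A(\tau)>0$ and $B(\tau)\ge\binom{N-1}{0}(1-\tau)^{N-1}>0$ for every $\tau\in(0,1)$, so $P_D$ is strictly positive on the open interval. Therefore $\mathcal{P}^{max}_D>0$ and every maximizer lies in the interior $(0,1)$, where $P_D$ is differentiable; by the interior-extremum (Fermat) condition, each such maximizer is a zero of $\frac{d}{d\tau}P_D(\tau)$.

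For part (ii), the strategy is to show $\frac{d}{d\tau}P_D(\tau)>0$ throughout $(0,\tau_0)$, where $\tau_0:=1-\big(\frac{N-1}{N-1+D}\big)^{1/D}$ is the left endpoint of $\mathcal{I}$; this excludes any maximizer below $\tau_0$ and, together with $P_D(1)=0$, confines $\tau^{opt}_D$ to $[\tau_0,1)=\mathcal{I}$. Since $A,B>0$ on $(0,1)$, the sign of the derivative is governed by $\frac{A'}{A}+\frac{B'}{B}$, so it suffices to prove $\frac{A'(\tau)}{A(\tau)}>-\frac{B'(\tau)}{B(\tau)}$ for $\tau<\tau_0$. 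I would compute $A'(\tau)=D(1-\tau)^{D-1}$ directly and obtain $B'$ from the telescoping identity $\frac{d}{d\tau}\sum_{i=0}^{m}\binom{n}{i}\tau^i(1-\tau)^{n-i}=-n\binom{n-1}{m}\tau^m(1-\tau)^{n-1-m}$ (with $n=N-1$, $m=M-1$), which yields $-B'(\tau)=(N-1)\binom{N-2}{M-1}\tau^{M-1}(1-\tau)^{N-1-M}$.

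The crux is a clean two-sided estimate pivoting on the quantity $\frac{N-1}{1-\tau}$. On the interference side, comparing the single numerator term with the matching $i=M-1$ summand of $B$ and using $\binom{N-2}{M-1}\le\binom{N-1}{M-1}$ gives the uniform bound $-\frac{B'(\tau)}{B(\tau)}\le\frac{N-1}{1-\tau}$ on $(0,1)$. On the transmission side, a short manipulation shows that $\frac{A'(\tau)}{A(\tau)}>\frac{N-1}{1-\tau}$ is equivalent to $(1-\tau)^D>\frac{N-1}{N-1+D}$, which holds precisely when $\tau<\tau_0$. Chaining the two estimates yields $\frac{A'}{A}>\frac{N-1}{1-\tau}\ge-\frac{B'}{B}$ on $(0,\tau_0)$, so $P_D$ is strictly increasing there, completing the argument. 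I expect the main obstacle to be establishing the telescoping derivative formula for the binomial tail together with the term-by-term comparison behind the bound on $-B'/B$; the endpoint evaluation and the reduction of $\frac{A'}{A}>\frac{N-1}{1-\tau}$ to the defining inequality of $\tau_0$ are routine by comparison.
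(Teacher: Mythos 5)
Your proposal is correct, and its skeleton matches the paper's: part (i) is proved identically (both endpoints give $P_D=0$, the interior is strictly positive, so the maximizer is interior and Fermat applies), and part (ii) in both cases comes down to showing $\frac{\dd}{\dd\tau}P_D(\tau)>0$ on $(0,\tau_0)$ with $\tau_0=1-\big(\tfrac{N-1}{N-1+D}\big)^{1/D}$, pivoting on the quantity $\tfrac{N-1}{1-\tau}$. The one substantive difference is how the bound $-B'/B\le\tfrac{N-1}{1-\tau}$ on the reception factor is obtained. The paper expands $f_1'=\tfrac{f_2}{\tau(1-\tau)}-\tfrac{(N-1)f_1}{1-\tau}$ and simply discards the nonnegative $f_2$-term, whereas you invoke the exact telescoping formula $B'(\tau)=-(N-1)\binom{N-2}{M-1}\tau^{M-1}(1-\tau)^{N-1-M}$ and compare it term-by-term with the $i=M-1$ summand of $B$. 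Your identity is correct (it is equivalent, via $(N-M)\binom{N-1}{M-1}=(N-1)\binom{N-2}{M-1}$, to the identity marked $(*)$ that the paper proves by a separate recursive computation inside Lemma~2), so your route front-loads a fact the paper needs later anyway; the paper's route is slightly shorter for this lemma since $f_2\ge 0$ is immediate. Both arguments are complete, and your omission of the paper's additional observation that $\frac{\dd}{\dd\tau}P_D(\tau)<0$ as $\tau\to 1^-$ is harmless here, since that fact is not needed to confine the maximizer to $\mathcal{I}=[\tau_0,1)$.
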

\begin{proof}

%Since that $P_D(\tau)>P_D(0)=P_D(1)=0$ if $\tau \in (0,1)$, we know $P_D(\tau)$ has a local maximum at $\tau^{opt}_{D}$, which lies in $(0,1)$.

We prove these two statements by investigating the monotonicity of $P_D(\tau)$.
Define
\begin{equation}
f_1(\tau):=\sum_{i=0}^{M-1}  \binom{N-1}{i}  \tau^i (1-\tau)^{N-1-i}
\label{eq:f1}
\end{equation}
and
\begin{equation}
f_2(\tau):=\sum_{i=0}^{M-1}  i \binom{N-1}{i}  \tau^i (1-\tau)^{N-1-i}.
\label{eq:f2}
\end{equation}
Obviously, $f_1(\tau)>0$ and $f_2(\tau) \geq 0$ for $0<\tau<1$, $1\leq M < N$ and $D\geq1$.

By adopting the notation $f_1$ and $f_2$, the derivative of $P_{D}(\tau)$ with respect to $\tau$ can be written as
\begin{align}
\frac{\dd}{\dd\tau}&P_{D}(\tau)= D(1-\tau)^{D-1} f_1(\tau) \notag \\
 & \ \ \ \ + \Big (\frac{f_2(\tau)}{\tau(1-\tau)}-\frac{(N-1)f_1(\tau)}{1-\tau}\Big) \big(1-(1-\tau)^D\big) \notag \\
&=\Big((N+D-1)(1-\tau)^{D-1}-\frac{N-1}{1-\tau}\Big) f_1(\tau) \notag \\
& \ \ \ \ + \Big(\frac{1-(1-\tau)^D}{\tau(1-\tau)}\Big) f_2(\tau) \label{eq:P'1} \\
&=\frac{1}{\tau(1-\tau)} \Big(   \big(1-(1-\tau)^D\big)f_2(\tau)  \notag \\
& \ \ \ \ -\big(N-1-(N+D-1)(1-\tau)^D  \big)\tau f_1(\tau) \Big). \label{eq:P'2}
\end{align}
It is easy to see that $\frac{\dd}{\dd\tau}P_{D}(\tau)$ is continuous on the interval $(0,1)$.

Since that $P_D(\tau)>P_D(0)=P_D(1)=0$ if $\tau \in (0,1)$, we know the continuous function $P_D(\tau)$ has a local maximum at $\tau^{opt}_{D}$, which lies in $(0,1)$.
As $\frac{\dd}{\dd\tau}P_{D}(\tau)$ always exists on the interval $\tau \in (0,1)$, by the Fermat's Theorem, $\tau^{opt}_{D}$ is a solution of $\frac{\dd}{\dd\tau}P_{D}(\tau)=0$.

Furthermore, we have from \eqref{eq:P'1} that $\frac{\dd}{\dd\tau}P_{D}(\tau)>0$ for $\tau\in(0,1)\setminus\mathcal{I}$, and from \eqref{eq:P'2} that $\frac{\dd}{\dd\tau}P_{D}(\tau)<0$ as $\tau \rightarrow 1^-$.
By The Intermediate Value Theorem, the solutions of $\frac{\dd}{\dd\tau}P_{D}(\tau)=0$ must be in $\mathcal{I}$, i.e., $\tau^{opt}_{D}$ must lie in $\mathcal{I}$.
\end{proof}

Let
\[
H_1(\tau):= \frac{f_2(\tau)}{f_1(\tau)}\]
and
\[
H_2(\tau):= \tau \Big(N+D-1-\frac{D}{1-(1-\tau)^D}\Big).
\]
Following the proof of Lemma~\ref{lemma:optimal-exist}, in \eqref{eq:P'2}, $\tau^*$ is a solution of equation $\frac{\dd}{\dd\tau}P_{D}(\tau)=0$ if and only if it is a solution of the following equation:
\begin{equation}
H_1(\tau)-H_2(\tau)=0.
\label{eq:dd}
\end{equation}
In what follows, we will show that the equation \eqref{eq:dd} has a unique solution in the interval $\tau \in (0,1)$ by investigating the monotonicity of $H_1(\tau)$ and $H_2(\tau)$ separately.

\begin{lemma}\label{lemma:H1}
For $\tau\in(0,1)$, we have $\frac{\dd}{\dd\tau} H_1(\tau)=0$ if $M=1$, and otherwise
$$0 < \frac{\dd}{\dd\tau} H_1(\tau) < N-1.$$
\end{lemma}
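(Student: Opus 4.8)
The plan is to read $H_1$ probabilistically and reduce everything to a statement about a truncated binomial. With $Y\sim\mathrm{Bin}(N-1,\tau)$ and $b_i(\tau):=\binom{N-1}{i}\tau^i(1-\tau)^{N-1-i}$, I note that $f_1(\tau)=\mathbb{P}(Y\le M-1)$ and $f_2(\tau)=\mathbb{E}\!\left[Y\,\mathbf{1}\{Y\le M-1\}\right]$, so that
\[
H_1(\tau)=\frac{f_2(\tau)}{f_1(\tau)}=\mathbb{E}\!\left[Y\mid Y\le M-1\right]
\]
is exactly the conditional mean of the binomial truncated to $\{0,\dots,M-1\}$. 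This reframing is what makes both bounds transparent.

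First I would compute $H_1'$ in closed form. Using the elementary identity $b_i'(\tau)=\frac{i-(N-1)\tau}{\tau(1-\tau)}\,b_i(\tau)$, the truncated partial sums $S_k(\tau):=\sum_{i=0}^{M-1} i^k b_i(\tau)$ satisfy $f_1'=\frac{S_1-(N-1)\tau S_0}{\tau(1-\tau)}$ and $f_2'=\frac{S_2-(N-1)\tau S_1}{\tau(1-\tau)}$ (with $S_0=f_1$, $S_1=f_2$). Substituting into $H_1'=(f_2'f_1-f_2f_1')/f_1^2$, the terms proportional to $(N-1)\tau$ cancel and I obtain
\[
H_1'(\tau)=\frac{S_0S_2-S_1^2}{\tau(1-\tau)\,S_0^2}=\frac{\mathrm{Var}\!\left(Y\mid Y\le M-1\right)}{\tau(1-\tau)}.
\]
Everything then follows from controlling this conditional variance.

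The $M=1$ case is immediate: only the $i=0$ term survives, so $f_2\equiv0$, $H_1\equiv0$ and $H_1'\equiv0$. For $M\ge2$ and $\tau\in(0,1)$ the truncated law charges at least the two distinct values $0$ and $1$ with positive probability, so its variance is strictly positive, giving the lower bound $H_1'(\tau)>0$ directly. For the upper bound I would rewrite $N-1=\mathrm{Var}(Y)/(\tau(1-\tau))$, the unconditional binomial variance over the same factor, so that $H_1'(\tau)<N-1$ is equivalent to the strict variance inequality
\[
\mathrm{Var}\!\left(Y\mid Y\le M-1\right)<\mathrm{Var}(Y),
\]
i.e.\ truncating the binomial from above strictly reduces its variance.

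This last inequality is the main obstacle, and it is genuinely false for arbitrary laws; it must use the strong unimodality (log-concavity) of the binomial pmf, $b_i^2\ge b_{i-1}b_{i+1}$, which is inherited by every prefix truncation. My plan is to prove that appending one more top atom never decreases the conditional variance, and then induct from the threshold $M-1$ up to $N-1$. Writing $Y^{(m)}$ for $Y$ conditioned on $Y\le m$ and applying the law of total variance with the single added point $\{m\}$ gives
\[
\mathrm{Var}\bigl(Y^{(m)}\bigr)=\alpha\,\mathrm{Var}\bigl(Y^{(m-1)}\bigr)+\alpha(1-\alpha)\bigl(m-\mu_{m-1}\bigr)^2,
\]
with $\alpha=\mathbb{P}(Y\le m-1\mid Y\le m)$ and $\mu_{m-1}=\mathbb{E}[Y^{(m-1)}]$, so the step reduces to showing $\alpha\,(m-\mu_{m-1})^2\ge \mathrm{Var}(Y^{(m-1)})$. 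The subtlety is that $\alpha<1$, so a crude Popoviciu/Bhatia--Davis bound on $\mathrm{Var}(Y^{(m-1)})$ does not suffice: the high-variance extremal configurations, with mass split between the endpoints $0$ and $m-1$, are precisely those forbidden by log-concavity, which keeps the truncated law concentrated. Establishing this inequality from log-concavity is the crux; the strictness, together with the fact that the bound is only attained in the limit $\tau\to0^+$ (as one already sees for $N=3$, $M=2$, where $H_1'(\tau)=2/(1+\tau)^2$), confines the inequality to the open interval $(0,1)$.
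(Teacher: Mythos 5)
Your identity $\frac{\dd}{\dd\tau}H_1(\tau)=\mathrm{Var}(Y\mid Y\le M-1)/\bigl(\tau(1-\tau)\bigr)$ is correct (the computation via $b_i'=\frac{i-(N-1)\tau}{\tau(1-\tau)}b_i$ checks out), and it gives a clean, self-contained proof of the lower bound for $M\ge 2$ -- arguably cleaner than the paper, which imports $\frac{\dd}{\dd\tau}T(\tau)\ge 0$ from an external reference. The reformulation of the upper bound as $\mathrm{Var}(Y\mid Y\le M-1)<\mathrm{Var}(Y)$, using $\mathrm{Var}(Y)=(N-1)\tau(1-\tau)$, is also correct, as is the law-of-total-variance recursion reducing each truncation step to $\alpha\,(m-\mu_{m-1})^2\ge \mathrm{Var}\bigl(Y^{(m-1)}\bigr)$.

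However, that last inequality is exactly where your argument stops: you state yourself that ``establishing this inequality from log-concavity is the crux,'' and you do not establish it. This is a genuine gap, not a routine omission, for two reasons. First, the claim ``prefix truncation of a log-concave pmf does not increase the variance'' is the entire content of the upper bound -- it is the discrete analogue of the statement that the threshold lies at least one truncated standard deviation above the truncated mean, it fails for general pmfs (as you note), and extracting it from $b_i^2\ge b_{i-1}b_{i+1}$ requires a real argument that you only gesture at. Second, the strict inequality needs separate attention: your induction is phrased with ``$\ge$'' at each step, so even if the step inequality were proved you would still need at least one strictly increasing step to conclude $\mathrm{Var}(Y\mid Y\le M-1)<\mathrm{Var}(Y)$. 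Note that the hard part of this lemma is precisely the bound $\frac{\dd}{\dd\tau}H_1(\tau)<N-1$, which the prior work of Bae et al.\ could only assume; the paper closes it by a completely different, purely algebraic route: a telescoping identity gives $H_1(\tau)=(N-1)\tau-(N-M)\binom{N-1}{M-1}R(\tau)$ with $R(\tau)=\tau^{M}(1-\tau)^{N-M}/\sum_{i=0}^{M-1}\binom{N-1}{i}\tau^i(1-\tau)^{N-1-i}$, after which the substitution $x=\tau/(1-\tau)$ shows $R$ is strictly increasing, so the derivative of $H_1$ falls strictly below $N-1$ with no variance comparison needed. Either complete your log-concavity step (it is a true statement, but you must prove it), or switch to an explicit representation of $(N-1)\tau-H_1(\tau)$ as above.
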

\begin{proof}
The case for $M=1$ obviously holds as $H_1(\tau)=0$.
In the following, we only consider $2 \leq M < N$.

We first show that $\frac{\dd}{\dd\tau} H_1(\tau) > 0$ by a known result in~\cite{BaeMPR14}.
Let
\begin{equation}
T(\tau):=\frac{\sum_{i=1}^{M}  i^2 \binom{N}{i}  \tau^i (1-\tau)^{N-i}}{\sum_{i=1}^{M} i  \binom{N}{i}  \tau^i (1-\tau)^{N-i}}.
\label{eq:T}
\end{equation}
By letting $j=i-1$, after some algebraic manipulations, we have
\begin{align}
T(\tau)-1&=\frac{\sum_{i=1}^{M}  (i^2-i) \binom{N}{i}  \tau^i (1-\tau)^{N-i}}{\sum_{i=1}^{M} i  \binom{N}{i}  \tau^i (1-\tau)^{N-i}} \notag \\
&=\frac{\sum_{j=0}^{M-1}  j(j+1) \binom{N}{j+1}  \tau^{j+1} (1-\tau)^{N-1-j}}{\sum_{j=0}^{M-1} (j+1)  \binom{N}{j+1}  \tau^{j+1} (1-\tau)^{N-1-j}} \notag \\
&=\frac{\tau N\sum_{j=0}^{M-1}  j \binom{N-1}{j}  \tau^j (1-\tau)^{N-1-j}}{\tau N \sum_{j=0}^{M-1} \binom{N-1}{j}  \tau^{j} (1-\tau)^{N-1-j}} \notag \\
&=H_1(\tau). \label{eq:H1_1}
\end{align}
Since it has been proven in~\cite{BaeMPR14} that $\frac{\dd}{\dd\tau} T(\tau) \geq 0$ for $\tau\in(0,1)$ and $M>1$, we have $\frac{\dd}{\dd\tau}H_1(\tau) = \frac{\dd}{\dd\tau}T(\tau) > 0$ by \eqref{eq:H1_1}.

Now, we will show that $\frac{\dd}{\dd\tau} H_1(\tau)< N-1$.
By the binomial theorem, $H_1(\tau)$ can be rewritten as
%Rewrite $H_1(\tau)$ in the following form:
\begin{align}
H_1(\tau)&=\frac{(N-1)\tau- \sum_{i=M}^{N-1} i \binom{N-1}{i}  \tau^i  (1-\tau)^{N-1-i}}{1-\sum_{i=M}^{N-1} \binom{N-1}{i}  \tau^i  (1-\tau)^{N-1-i} } \notag \\
&=(N-1)\tau \notag  \\
& \ \ \ \ -\frac{\sum_{i=M}^{N-1} \big(i-(N-1)\tau \big) \binom{N-1}{i}\tau^i(1-\tau)^{N-1-i} }{1-\sum_{i=M}^{N-1} \binom{N-1}{i}  \tau^i  (1-\tau)^{N-1-i}} \notag \\
&\stackrel{(*)}{=}(N-1)\tau -\frac{(N-M) \binom{N-1}{M-1} \tau^M (1-\tau)^{N-M}}{\sum_{i=0}^{M-1} \binom{N-1}{i}  \tau^i  (1-\tau)^{N-1-i} }, \notag \\
&= (N-1)\tau-(N-M) \binom{N-1}{M-1}R(\tau), \label{eq:H1_2}
\end{align}
where
\[
R(\tau):=\frac{\tau^{M} (1-\tau)^{N-M}}{\sum_{i=0}^{M-1} \binom{N-1}{i}  \tau^i  (1-\tau)^{N-1-i} }.
\]
The proof of $(*)$ is as follows.
%The proof of the equality in~\eqref{eq:H2u} is relegated to Appendix.

For $m=2,3,\ldots,N-1$, we have
\begin{align*}
&  \binom{N-1}{m-1}(N-m)\tau^{m}(1-\tau)^{N-m} \\
& +\big(m-1-(N-1)\tau \big) \binom{N-1}{m-1}\tau^{m-1}(1-\tau)^{N-m} \\
=& \binom{N-1}{m-1}(N-m)\tau^{m}(1-\tau)^{N-m} \\
&- (N-m)  \binom{N-1}{m-1}\tau^{m-1}(1-\tau)^{N-m} \\
&+ (N-1)\binom{N-1}{m-1}\tau^{m-1}(1-\tau)^{N-m+1}\\
=& -\binom{N-1}{m-1}(N-m)\tau^{m-1}(1-\tau)^{N-m+1} \\
& + (N-1)\binom{N-1}{m-1}\tau^{m-1}(1-\tau)^{N-m+1}\\
=&(m-1)\binom{N-1}{m-1}\tau^{m-1}(1-\tau)^{N-m+1} \\
=& \binom{N-1}{m-2}(N-m+1)\tau^{m-1}(1-\tau)^{N-m+1}.
\end{align*}
Then by recursively using the above equation, we have
\begin{align*}
&\sum_{i=M}^{N-1} \big(i-(N-1)\tau \big) \binom{N-1}{i}\tau^i(1-\tau)^{N-1-i} \\
=&\binom{N-1}{N-2}\tau^{N-1}(1-\tau)\\
&+ \sum_{i=M}^{N-2} \big(i-(N-1)\tau \big) \binom{N-1}{i}\tau^i(1-\tau)^{N-1-i} \\
=&\binom{N-1}{M-1}(N-M)\tau^{M} (1-\tau)^{N-M}.
\end{align*}

To prove $\frac{\dd}{\dd\tau}H_1(\tau)<N-1$, by \eqref{eq:H1_2}, it suffices to show that $\frac{\dd}{\dd\tau}R(\tau)>0$.
Let
\[
Q(x) := \frac{x^{M-1}}{\sum_{i=0}^{M-1} \binom{N-1}{i} x^i}.
\]
By plugging $x=\frac{\tau}{1-\tau}$, we have
\begin{equation}
R(\tau)=\tau Q(x).
\label{eq:Qx}
\end{equation}
Note that as $\tau$ increases from $0$ to $1$, $x$ increases from 0 to $\infty$, and hence $Q(x)>0$.
By taking the derivative of $Q(x)$ with respect to $x$, we have, for $x>0$,
\begin{equation}
\frac{\dd}{\dd x}Q(x)=\frac{ \sum_{i=0}^{M-1}(M-1-i)\binom{N-1}{i} x^{M+i-2}}{\Big(\sum_{i=0}^{M-1} \binom{N-1}{i}  x^i\Big)^2} > 0.
\label{eq:dQ}
\end{equation}
Then,
\begin{equation}
\frac{\dd}{\dd\tau}R(\tau) = \frac{\dd}{\dd\tau} \Big(\tau Q(x)\Big) = Q(x) + \frac{\tau}{(1-\tau)^2}\cdot\frac{\dd}{\dd x}Q(x) > 0.
\label{eq:dR}
\end{equation}
Hence the result follows.
\end{proof}

\begin{lemma}\label{lemma:H2}
For $\tau\in(0,1)$, we have
$$\frac{\dd}{\dd\tau} H_2(\tau) > N-1.$$
\end{lemma}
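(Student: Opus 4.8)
The plan is to compute $\frac{\dd}{\dd\tau}H_2(\tau)$ explicitly and then reduce the target inequality to an elementary polynomial inequality from which $N$ has disappeared. Writing $g(\tau):=1-(1-\tau)^D$, so that $g'(\tau)=D(1-\tau)^{D-1}$, I have $H_2(\tau)=(N+D-1)\tau - D\tau/g(\tau)$. Differentiating the quotient gives
$$\frac{\dd}{\dd\tau}H_2(\tau) = (N+D-1) - D\cdot\frac{g(\tau)-\tau g'(\tau)}{g(\tau)^2}.$$
Since the desired bound is $\frac{\dd}{\dd\tau}H_2(\tau)>N-1$, subtracting $N-1$ from both sides shows it is equivalent to $D\big(1 - (g-\tau g')/g^2\big) > 0$, i.e., because $D>0$ and $g^2>0$, to
$$g(\tau)^2 - g(\tau) + \tau g'(\tau) > 0.$$
Note that $N$ has already cancelled out, so the claim is really a statement about $D$ alone.

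Next I would make the substitution $u:=1-\tau\in(0,1)$, so that $g=1-u^D$ and $\tau g' = (1-u)Du^{D-1}$. A short expansion gives $g^2-g = u^{2D}-u^D$, and hence
$$g^2-g+\tau g' = u^{2D}-u^D + D u^{D-1}-Du^{D} = u^{D-1}\big(u^{D+1}-(D+1)u+D\big).$$
Since $u^{D-1}>0$ on $(0,1)$, the whole claim reduces to showing that $\phi(u):=u^{D+1}-(D+1)u+D$ is strictly positive for $u\in(0,1)$.

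Finally I would settle $\phi>0$ by a one-line monotonicity argument: $\phi(1)=1-(D+1)+D=0$, while $\phi'(u)=(D+1)(u^D-1)<0$ for $u\in(0,1)$, so $\phi$ is strictly decreasing on $(0,1)$ and therefore stays strictly above its right-endpoint value $\phi(1)=0$. This yields $\phi(u)>0$, hence $g^2-g+\tau g'>0$, and thus $\frac{\dd}{\dd\tau}H_2(\tau)>N-1$.

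I expect the only delicate step to be the algebraic reduction in the first two paragraphs: one must keep track of the positive factor $D$ and the positive denominator $g^2$ when dividing through, and the cancellation that makes $N$ vanish and produces the clean factorization $u^{D-1}\phi(u)$ is where a sign or exponent slip would be easiest to make. Once the problem is distilled to $\phi(u)>0$, the monotonicity argument is entirely routine.
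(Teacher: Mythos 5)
Your proposal is correct and follows essentially the same route as the paper: both reduce the claim to $g^2-g+\tau g'>0$ and then to the positivity of $(1-\tau)^{D+1}+(D+1)\tau-1$, which is exactly your $\phi(1-\tau)$, settled by the same monotonicity-plus-boundary-value argument (the paper shows it is increasing in $\tau$ with limit $0$ at $\tau\to 0^+$; you show $\phi$ is decreasing in $u=1-\tau$ with $\phi(1)=0$). All the algebra checks out, so this is a faithful, slightly repackaged version of the paper's proof.
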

\begin{proof}
Taking the derivative of $H_2(\tau)$ with respect to $\tau$ derives that
\begin{align*}
\frac{\dd}{\dd\tau}& H_2(\tau)=N+D-1-D\frac{1-(1-\tau)^D-D\tau(1-\tau)^{D-1}}{\big(1-(1-\tau)^D\big)^2}  \\
%&=N+D-1-D\frac{(1-\tau)^{D-1}\big((1-D)\tau-1\big)+1}{\big(1-(1-\tau)^D\big)^2}  \\
%&=N-1+D\Big(1-\frac{(1-\tau)^{D-1}\big((1-D)\tau-1\big)+1}{\big(1-(1-\tau)^D\big)^2}\Big) \\
&=N-1+D\Big(1-\frac{1-(1-\tau)^{D}-D\tau(1-\tau)^{D-1}}{\big(1-(1-\tau)^D\big)^2}\Big)
\end{align*}

So we have
\begin{align}
&\frac{\dd}{\dd\tau}H_2(\tau) > N-1   \notag \\
\Leftrightarrow & \big(1-(1-\tau)^D\big)^2 > 1-(1-\tau)^{D}-D\tau(1-\tau)^{D-1} \notag \\
\Leftrightarrow & \big(1-(1-\tau)^D\big)^2 + (1-\tau)^{D}+D\tau(1-\tau)^{D-1}-1 > 0 \notag \\
\Leftrightarrow & (1-\tau)^{2D} - (1-\tau)^{D} + D\tau(1-\tau)^{D-1} > 0 \notag \\
\Leftrightarrow & (1-\tau)^{D+1} + (D+1)\tau -1 > 0 \label{eq:Gu}
\end{align}
Let $G(\tau):=(1-\tau)^{D+1} + (D+1)\tau -1$.
We have
\begin{align*}
\frac{\dd}{\dd\tau}G(\tau) &= -(D+1)(1-\tau)^D + D+1  \\
& = (D+1)\big(1-(1-\tau)^D \big),
\end{align*}
which is larger than $0$ for $\tau\in(0,1)$.
Therefore, $G(\tau)$ is strictly increasing for $\tau\in(0,1)$, which implies that $$G(\tau)>\lim_{\tau\to 0^+} G(\tau) =0.$$
Hence we complete the proof by \eqref{eq:Gu}.
\end{proof}

\begin{figure}
\begin{center}
  \includegraphics[height=2.5in,width=3.5in]{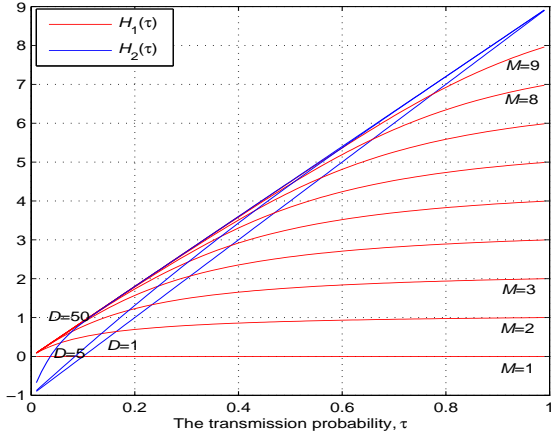}
\end{center}
\caption{$H_1(\tau)$ for the varying $M$ and $\tau$, $H_2(\tau)$ for the varying $D$ and $\tau$ when $N=10$.}
\label{fig:H12}
\end{figure}

To illustrate that $0 \leq \frac{\dd}{\dd\tau} H_1(\tau) < N-1 < \frac{\dd}{\dd\tau} H_2(\tau)$ on the interval $\tau \in (0,1)$ for any $D \geq 1$ and $1 \leq M < N$ obtained by Lemma 1 and Lemma 2, a numerical example is presented in Fig.~\ref{fig:H12}, which plots $H_1(\tau)$ and $H_2(\tau)$ for the varying $\tau$, $M$ and $D$ when $N=10$.

Now, we are ready to derive the uniqueness of $\tau^{opt}_{D}$.

\begin{theorem}\label{thm:optimal-unique}
For any integers $D\geq 1$ and $1\leq M<N$, the equation~\eqref{eq:dd} has a unique solution on $\tau$ in the interval $0 < \tau < 1$, denoted by $\tau^*$, and $\tau^{opt}_{D}=\tau^*$.
\end{theorem}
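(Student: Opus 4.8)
The plan is to reduce the entire claim to a single monotonicity statement about the difference $H_1-H_2$, and then to harvest existence from what has already been established in Lemma~\ref{lemma:optimal-exist}. First I would set $G(\tau):=H_1(\tau)-H_2(\tau)$ on the open interval $(0,1)$ and observe, directly from Lemma~\ref{lemma:H1} and Lemma~\ref{lemma:H2}, that
\[
\frac{\dd}{\dd\tau}G(\tau)=\frac{\dd}{\dd\tau}H_1(\tau)-\frac{\dd}{\dd\tau}H_2(\tau)<(N-1)-(N-1)=0
\]
for every $\tau\in(0,1)$, where the bound $\frac{\dd}{\dd\tau}H_1(\tau)<N-1$ covers both the case $M=1$ (value $0$, and $N\geq 2$) and $M\geq 2$. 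Hence $G$ is strictly decreasing, therefore injective, on $(0,1)$, so it can vanish at most once. This already delivers uniqueness of any solution of \eqref{eq:dd}.

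For existence I would appeal to Lemma~\ref{lemma:optimal-exist}: a maximizer $\tau^{opt}_{D}$ exists by the Extreme Value Theorem and satisfies $\frac{\dd}{\dd\tau}P_{D}(\tau^{opt}_{D})=0$ by part (i), while part (ii) places it in $\mathcal{I}\subset(0,1)$. On $(0,1)$ the equivalence recorded just before the theorem states that a point is a critical point of $P_{D}$ if and only if it solves \eqref{eq:dd}; consequently $\tau^{opt}_{D}$ is a zero of $G$, and \eqref{eq:dd} has at least one solution in $(0,1)$.

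Combining the two observations, $G$ possesses exactly one zero $\tau^*$ in $(0,1)$. Since every maximizer $\tau^{opt}_{D}$ is such a zero, every maximizer must coincide with $\tau^*$; in particular the maximizer is unique and $\tau^{opt}_{D}=\tau^*$, as claimed.

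I expect no serious analytic obstacle here, since Lemmas~\ref{lemma:H1} and \ref{lemma:H2} carry all the real weight and the remaining argument is essentially ``strictly monotone functions have at most one root.'' The only points requiring care are bookkeeping ones: invoking the critical-point/root equivalence strictly on the open interval $(0,1)$, where the factor $\tfrac{1}{\tau(1-\tau)}$ in \eqref{eq:P'2} and the term $1-(1-\tau)^D$ are nonzero; and confirming that $\tau^{opt}_{D}$ genuinely lies in $(0,1)$ (guaranteed by $P_D(0)=P_D(1)=0$ together with Lemma~\ref{lemma:optimal-exist}), so that the interior reasoning applies to it.
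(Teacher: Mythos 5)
Your proposal is correct and follows essentially the same route as the paper: both arguments rest entirely on Lemmas~\ref{lemma:H1} and \ref{lemma:H2} giving $\frac{\dd}{\dd\tau}H_1(\tau)<N-1<\frac{\dd}{\dd\tau}H_2(\tau)$, which forces $H_1-H_2$ to have at most one zero (the paper phrases this as a Rolle/Mean Value Theorem contradiction between two hypothetical roots, you phrase it as strict monotonicity of the difference), and both obtain existence and the identification $\tau^{opt}_D=\tau^*$ from Lemma~\ref{lemma:optimal-exist} together with the equivalence between critical points of $P_D$ and solutions of \eqref{eq:dd} on $(0,1)$. Your version is slightly more explicit about the existence step, but the mathematical content is identical.
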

\begin{proof}
Suppose there are two distinct solutions to the equation~\eqref{eq:dd} in $(0,1)$.
By The Mean Value Theorem, there exists a solution of $\frac{\dd}{\dd\tau}H_1({\tau})=\frac{\dd}{\dd\tau}H_2({\tau})$.
However, by Lemma~\ref{lemma:H1} and Lemma~\ref{lemma:H2}, we have $$\frac{\dd}{\dd\tau}H_1(\tau) < N-1 < \frac{\dd}{\dd\tau}H_2(\tau), \quad \forall\tau\in(0,1). $$
This implies a contradiction to $\frac{\dd}{\dd\tau}H_1({\tau})=\frac{\dd}{\dd\tau}H_2({\tau})$.
Hence we conclude that the equation~\eqref{eq:dd} has a unique solution on $\tau$ in the interval $0 < \tau < 1$, which by Lemma~\ref{lemma:optimal-exist} promises the uniqueness of $\tau^{opt}_D$, and yields $\tau^{opt}_{D}=\tau^*$.
\end{proof}

{\em Remark 3:} In the context of saturation throughput maximization, Bae et al. in~\cite{BaeMPR14} derived the $\tau^{opt}_1$ under the assumption $\frac{d }{d\tau} T(\tau)<N$ in the interval $\tau \in (0,1)$.
They claimed that they proved $\frac{\dd}{\dd\tau} T(\tau)<N$ for $M=1 ,2, 3$, but could not prove it for any arbitrary $M$ due to extremely complex algebraic manipulations.
Here, the proof in Lemma~\ref{lemma:H1} has addressed this unsolved question, as $\frac{\dd}{\dd\tau} T(\tau)=\frac{\dd}{\dd\tau} H_1(\tau) <  N-1$.
In other words, the issue of saturation throughput maximization under an $M$-user MPR channel is first completely addressed in this paper.
Moreover, we in Theorem~\ref{thm:optimal-unique} obtained the existence and uniqueness of $\tau^{opt}_D$ for any $D \geq 1$ and any $1 \leq M < N$ without any assumption.

%To prove that $\frac{d P_{s}(\tau)}{d\tau}=0$ has the unique solution, we show $H_1(\tau)$ is a monotone increasing function of $\tau$, too.
For the case $M=1$, which implies $H_1(\tau)=0$, it is easy to see from \eqref{eq:dd} that
\[
\tau^{opt}_{D}=1-\big(\frac{N-1}{N-1+D}\big)^\frac{1}{D}.
\]

For the case $1<M<N$, as $H_1(\tau)>H_2(\tau)=0$ when $\tau=1-(\frac{N-1}{N-1+D})^\frac{1}{D}$, we by Lemma~\ref{lemma:optimal-exist} know that
\[
\tau^{opt}_{D} \in \Big(1-\big(\frac{N-1}{N-1+D}\big)^\frac{1}{D}, 1\Big).
\]
Define a fixed-point iteration:
\[
x_{n+1}=x_n \cdot \frac{H_1(x_n)+1}{H_2(x_n)+1}
\]
for $n=0,1,2,\ldots$.
%Obviously, the fixed point of this iteration function is $\tau^{opt}_{D}$.
With the following theorem we guarantee that $\tau^{opt}_{D}$ for $M>1$ can be obtained by this fixed-point iteration.
\begin{theorem}\label{thm:FPI}
For any initial guess $x_0 \in (1-(\frac{N-1}{N-1+D})^\frac{1}{D}, 1)$, the sequence $x_0, x_1, x_2,\ldots$ converges to the fixed point $\tau^{opt}_{D}$ for $1<M<N$.
\end{theorem}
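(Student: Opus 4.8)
The plan is to read the recursion as $x_{n+1}=g(x_n)$ with $g(x):=x\,\frac{H_1(x)+1}{H_2(x)+1}$, and to prove global monotone convergence of $g$ to its unique fixed point. First I would check that $g$ is well-defined and continuous on $(\tau_L,1)$, where $\tau_L:=1-(\frac{N-1}{N-1+D})^{1/D}$ is the left endpoint of $\mathcal{I}$: on this interval $H_2>0$ (it vanishes at $\tau_L$ and is strictly increasing by Lemma~\ref{lemma:H2}), so $H_2+1>1>0$. A point $x\in(\tau_L,1)$ satisfies $g(x)=x$ iff $\frac{H_1(x)+1}{H_2(x)+1}=1$, i.e. iff $H_1(x)=H_2(x)$, i.e. iff $x$ solves~\eqref{eq:dd}; by Theorem~\ref{thm:optimal-unique} the only such point is $\tau^{opt}_D$, abbreviated $\tau^*$. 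Writing $g(x)-x=\frac{x\,(H_1(x)-H_2(x))}{H_2(x)+1}$ and noting that $\phi:=H_1-H_2$ is strictly decreasing (since $\phi'=H_1'-H_2'<0$ by Lemmas~\ref{lemma:H1} and~\ref{lemma:H2}) with $\phi(\tau_L)=H_1(\tau_L)>0$ and $\phi(\tau^*)=0$, I obtain $g(x)>x$ on $(\tau_L,\tau^*)$ and $g(x)<x$ on $(\tau^*,1)$.

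The crux, and the step I expect to be the main obstacle, is to show that $g$ is strictly increasing on $(\tau_L,1)$. Setting $T:=H_1+1>0$ (so $T'=H_1'>0$ for $M>1$ by Lemma~\ref{lemma:H1}), a logarithmic-derivative computation gives
\[
g'(x)>0 \iff \frac{1}{x}+\frac{T'(x)}{T(x)}>\frac{H_2'(x)}{H_2(x)+1}.
\]
Because $\frac{T'}{T}>0$, it suffices to prove $\frac{H_2'}{H_2+1}\le\frac{1}{x}$, equivalently $H_2+1-\tau H_2'\ge0$. I would compute $\tau H_2'-H_2=\tau^2\frac{\dd}{\dd\tau}\big(\frac{H_2}{\tau}\big)=\frac{D^2\tau^2(1-\tau)^{D-1}}{(1-(1-\tau)^D)^2}$, so the required bound $\tau H_2'-H_2\le 1$ becomes, after substituting $v:=1-\tau$ and using $1-v^D=(1-v)(1+v+\cdots+v^{D-1})$, simply $D\,v^{(D-1)/2}\le 1+v+\cdots+v^{D-1}$. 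This is exactly the AM--GM inequality for the $D$ numbers $1,v,\ldots,v^{D-1}$, so the monotonicity step pleasantly collapses to a one-line estimate; hence $g'(x)>0$ throughout $(\tau_L,1)$.

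Finally I would invoke the standard monotone fixed-point argument. If $x_0\in(\tau_L,\tau^*)$, then $x_1=g(x_0)>x_0$ (sign of $\phi$), while $x_1=g(x_0)<g(\tau^*)=\tau^*$ (as $g$ is increasing); inductively $(x_n)$ is increasing and bounded above by $\tau^*$, hence convergent, and continuity of $g$ on $(\tau_L,1)$ forces its limit to be a fixed point, namely $\tau^*$. Symmetrically, if $x_0\in(\tau^*,1)$ then $x_1\in(\tau^*,x_0)$, so $(x_n)$ is decreasing and bounded below by $\tau^*$ and again converges to $\tau^*$; the case $x_0=\tau^*$ is immediate. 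This gives convergence to $\tau^{opt}_D$ from every admissible $x_0$, proving Theorem~\ref{thm:FPI}. Everything beyond the monotonicity of $g$ is routine, so the AM--GM reduction in the second paragraph is the only genuinely technical point.
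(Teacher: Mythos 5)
Your proposal is correct and follows essentially the same route as the paper: both reduce the theorem to (i) the sign of $g(x)-x$ on either side of $\tau^{opt}_D$, obtained from Lemmas~\ref{lemma:H1} and~\ref{lemma:H2} via the strictly decreasing difference $H_1-H_2$, and (ii) the strict monotonicity of $g$, which in both cases boils down to the single inequality $1+H_2(x)-x\frac{\dd}{\dd x}H_2(x)=1-\frac{D^2x^2(1-x)^{D-1}}{(1-(1-x)^D)^2}>0$, followed by the standard monotone-sequence argument. The only genuine divergence is how that last inequality is proved: the paper shows $W(x)=\frac{D^2x^2(1-x)^{D-1}}{(1-(1-x)^D)^2}$ is decreasing with $\lim_{x\to 0^+}W(x)=1$, whereas you factor $1-(1-x)^D=x(1+v+\cdots+v^{D-1})$ with $v=1-x$ and reduce $W\le 1$ to the AM--GM inequality $Dv^{(D-1)/2}\le 1+v+\cdots+v^{D-1}$ --- a cleaner one-line estimate that is moreover strict for $v\in(0,1)$. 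Incidentally, your sign in $1+H_2-xH_2'$ is the correct one; the displayed derivative of $g$ in the paper's appendix carries a typographical $+x\frac{\dd H_2}{\dd x}$, although its subsequent evaluation agrees with yours.
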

\begin{proof}
%When $M=1$, $\frac{d }{d\tau} \frac{\tau H_1(\tau)}{H_2(\tau)} <1$ obviously holds as $H_1(\tau)=0$.
%In the following, we only consider $M>1$.
Define $$g(x):=\frac{x(H_1(x)+1)}{H_2(x)+1}$$ on the domain $(0,1)$.
By \eqref{eq:dd} and Theorem~\ref{thm:optimal-unique}, the equation $g(x)=x$ has a unique solution at $x=\tau^{opt}_{D}$ in $(0,1)$.
Therefore, the case that $x_0=\tau^{opt}_{D}$ obtains the fixed point $\tau^{opt}_{D}$ since $g(\tau^{opt}_{D})=\tau^{opt}_{D}$.
So we consider $x_0\neq\tau^{opt}_{D}$ in what follows.

Since $g(x)=x$ has a unique solution at $\tau^{opt}_{D}\in (0,1)$, to prove the sequence $\{x_n\}_0^\infty$ converges to $\tau^{opt}_D$, it suffices to show that
\begin{align}
\begin{cases}
x<g(x)<\tau^{opt}_D, & \text{for }x\in(1-(\frac{N-1}{N-1+D})^\frac{1}{D}, \tau^{opt}_D); \\
\tau^{opt}_D<g(x)<x, & \text{for }x\in(\tau^{opt}_D,1).
\end{cases}
\label{eq:converge}
\end{align}

As proved in Appendix that
\begin{equation}
\frac{\dd}{\dd x} g(x) > 0 \quad \text{for }x\in (0,1),
\label{eq:dg0}
\end{equation}
$g(x)$ is increasing on $(0,1)$.
This implies that
\begin{align}
\begin{cases}
g(x)<\tau^{opt}_D, & \text{for }x\in(1-(\frac{N-1}{N-1+D})^\frac{1}{D}, \tau^{opt}_D); \\
\tau^{opt}_D<g(x), & \text{for }x\in(\tau^{opt}_D,1).
\end{cases}
\label{eq:converge_1}
\end{align}
Since, in the interval $(0,1)$, $H_1(x)=H_2(x)$ only when $x=\tau^{opt}_D$ and $\frac{\dd}{\dd x}H_1(x) < \frac{\dd}{\dd x}H_2(x)$ by Lemma~\ref{lemma:H1} and Lemma~\ref{lemma:H2}, we have $0<H_1(x)<H_2(x)$ for $x\in(\tau^{opt}_D,1)$.
Then,
\begin{equation}
g(x)=x\frac{H_1(x)+1}{H_2(x)+1}<x, \quad \text{for } x\in(\tau^{opt}_D,1).
\label{eq:converge_2}
\end{equation}
Following the same argument, we have $H_1(x)>H_2(x)$ for $x\in(1-(\frac{N-1}{N-1+D})^\frac{1}{D}, \tau^{opt}_D)$.
By the result in Lemma~\ref{lemma:H2} that $\frac{\dd}{\dd x}H_2(x)>0$ and the L'Hospital's Rule that
$$\lim_{x\to 0^+}H_2(x)=-1,$$
we further have $H_1(x)>H_2(x)>-1$ for $x\in(1-(\frac{N-1}{N-1+D})^\frac{1}{D}, \tau^{opt}_D)$.
Then,
\begin{align}
g(x) =& x\frac{H_1(x)+1}{H_2(x)+1}>x, \notag \\
&\text{for } x\in(1-(\frac{N-1}{N-1+D})^\frac{1}{D}, \tau^{opt}_D).
\label{eq:converge_3}
\end{align}
Therefore, \eqref{eq:converge} can be derived combining \eqref{eq:converge_1}--\eqref{eq:converge_3}, and thus the result follows.
\end{proof}

%For a wide range of $1-(\frac{N-1}{N-1+D})^\frac{1}{D} \leq \tau_0 < 1$, we identify that the fixed-point iteration always converges to the unique solution.

\section{Runtime Optimization}
In the previous section, the optimal transmission probability has been derived, however, the analysis requires knowing in advance the number of users, $N$, which may be unavailable in some practical scenarios of distributed networks.
To cope with this restriction, we in this section develop a distributed algorithm to estimate $N$ for $M>1$, by which each user can tune the transmission probability to obtain an SDP close to the theoretical limit.

Consider a tagged user.
Recall the variable $Y$ that denotes the number of packets transmitted by the other $N-1$ users in a time slot.
By the distribution of $Y$ in~\eqref{eq:Y}, for $1 \leq i_1 < i_2 \leq M$, we obtain the following equation:
\begin{align}
&\frac{\mathbb{P} (Y=i_1)  \mathbb{P} (Y=i_2-1)}{\mathbb{P} (Y=i_2)\mathbb{P} (Y=i_1-1)} \notag  \\
&= \frac{\binom{N-1}{i_1} \binom{N-1}{i_2-1} \tau^{i_1+i_2-1} (1-\tau)^{2N-1-i_1-i_2}}{\binom{N-1}{i_2} \binom{N-1}{i_1-1} \tau^{i_1+i_2-1} (1-\tau)^{2N-1-i_1-i_2}} \notag \\
&= \frac{\binom{N-1}{i_1} \binom{N-1}{i_2-1}}{\binom{N-1}{i_2} \binom{N-1}{i_1-1}} =\frac{i_2(N-i_1)}{i_1(N-i_2)}.  \label{eq:PN}
\end{align}
It then directly follows that:
\begin{equation}
N=\frac{i_2(i_2-i_1)}{i_1\frac{\mathbb{P} (Y=i_1)  \mathbb{P} (Y=i_2-1)}{\mathbb{P} (Y=i_2)\mathbb{P} (Y=i_1-1)} -i_2}+i_2.
\label{eq:N}
\end{equation}
As $\frac{\mathbb{P} (Y=i_1)  \mathbb{P} (Y=i_2-1)}{\mathbb{P} (Y=i_2)\mathbb{P} (Y=i_1-1)}$ is locally measurable if a user is equipped with an array with at least $i_2$ antennas~\cite{Godara97,MPR_MAC13}, we find that~\eqref{eq:N} provides a linear function for the tagged user to estimate $N$ without priori knowledge of other access parameters.

We assume that the tagged user knows there are at most $N_{\text{max}}$ users in the network, but the actual number of users is changing and unknown.
An update interval is defined as a block of consecutive of $L$ time slots.
We require the tagged user to update the transmission probability at the beginning of the $n+1$th update interval, according to the following two necessary estimates of the network status.
\begin{enumerate}
  \item $\mu_n$: the estimated value of $\frac{\mathbb{P} (Y=i_1)  \mathbb{P} (Y=i_2-1)}{\mathbb{P} (Y=i_2)\mathbb{P} (Y=i_1-1)}$ at the end of the $n$th update interval;
  \item $N_n$: the estimated value of $N$ at the end of the $n$th update interval.
\end{enumerate}

The tagged user initially guesses that there are $N_0=N_{\text{max}}$ users, and sets $\mu_{0}=\frac{i_2(N_0-i_1)}{i_1(N_0-i_2)}$.
The transmission probability for the first update interval is obtained with given $N_0$, by Theorem~\ref{thm:optimal-unique} and a fixed-point iteration.
During the $n$th interval for $n=1,2,\ldots$, we describe the proposed algorithm as follows:
\begin{enumerate}
\item {\em Update $\mu_n$ at the end of the $n$th update interval}:
To evaluate $\frac{\mathbb{P} (Y=i_1)  \mathbb{P} (Y=i_2-1)}{\mathbb{P} (Y=i_2)\mathbb{P} (Y=i_1-1)}$ at runtime, the tagged user is required to record $\overline{A}_{i,n}$, the number of the slots during the $n$th update interval in which $i$ users are simultaneously transmitting and the tagged user is not transmitting, for $i=i_1-1,i_1,i_2-1$ and $i_2$.

Let $\overline{\mu}_n$ be the measure of $\frac{\mathbb{P} (Y=i_1)  \mathbb{P} (Y=i_2-1)}{\mathbb{P} (Y=i_2)\mathbb{P} (Y=i_1-1)}$ during the $n$th update interval, and its value is calculated by:
\[
\overline{\mu}_n =  \frac{\overline{A}_{i_1,n}\cdot \overline{A}_{i_2-1,n} }{\overline{A}_{i_2,n}\cdot \overline{A}_{i_1-1,n}}
\]

To avoid sharp changes in the estimated value, the tagged user further applies an Exponential Moving Average filter as follows:
\[
\mu_n=\delta\cdot\mu_{n-1}+ (1-\delta) \cdot \overline{\mu}_n
\]
where $\delta \in [0, 1]$ is a memory factor.
\item {\em Update $N_n$ at the end of the $n$th update interval}:
By~\eqref{eq:N}, the tagged user obtains
\[
N_n=\Big\langle \frac{i_2(i_2-i_1)}{i_1 {\mu}_n-i_2}+i_2 \Big\rangle
\]
where $\langle x \rangle$ is the integer closest to $x$.
\item {\em Update the transmission probability at the beginning of the $n+1$th update interval}: Update the transmission probability with given $N_n$, by Theorem~\ref{thm:optimal-unique} and a fixed-point iteration.
\end{enumerate}
To avoid harmful measure of $\overline{\mu}_{n}$ due to occasional fluctuations of the network status,
(i) if $\overline{\mu}_n$ is found to be smaller than $\mu_{0}$, a contradiction to the fact that~\eqref{eq:PN} must be equal to or larger than $\mu_0$ as $M<N \leq  N_0$, the tagged user sets $\overline{\mu}_n=\mu_{0}$; (ii) if $\overline{\mu}_n$ is found larger than $\frac{i_2(M+1-i_1)}{i_1(M+1-i_2)}$, a contradiction to~\eqref{eq:PN} as $N>M$, the tagged user sets $\overline{\mu}_{n}=\frac{i_2(M+1-i_1)}{i_1(M+1-i_2)}$; and
(iii) if $\overline{\mu}_{n}$ has an invalid value as $A_{i_2,n}\cdot A_{i_1-1,n}=0$, the tagged user sets $\overline{\mu}_n=\overline{\mu}_{n-1}$.

{\em Remark 4:}
It should be noted that, although the proposed algorithm for estimating $N$ is devised for the slotted ALOHA in saturation assumption, it can apply also to unsaturated traffic, provided that the estimation target becomes the average number of competing users rather than the total number of users; and can also apply to $\tau$-persistent CSMA, provided that the length of a slot may vary due to different channel status: idle, collision or success.

\section{Simulation Results}
In this section, to demonstrate the accuracy of the derived analytical results and the effectiveness of our proposed tuning algorithm, we present simulation results with respect to different network parameters obtained from Monte Carlo simulation developed in Matlab.
In the following, we will analyze the performance under stationary scenarios when $N$ is known, and then under dynamic scenarios when $N$ is unknown.
Each simulation point in stationary conditions represents the average value over 10 independent runs, each of which consists of $10^6$ slots.
The results in dynamic conditions are from a single representative simulation run for 500 updated intervals.

\subsection{Stationary Scenarios when $N$ is known}
\begin{figure*}
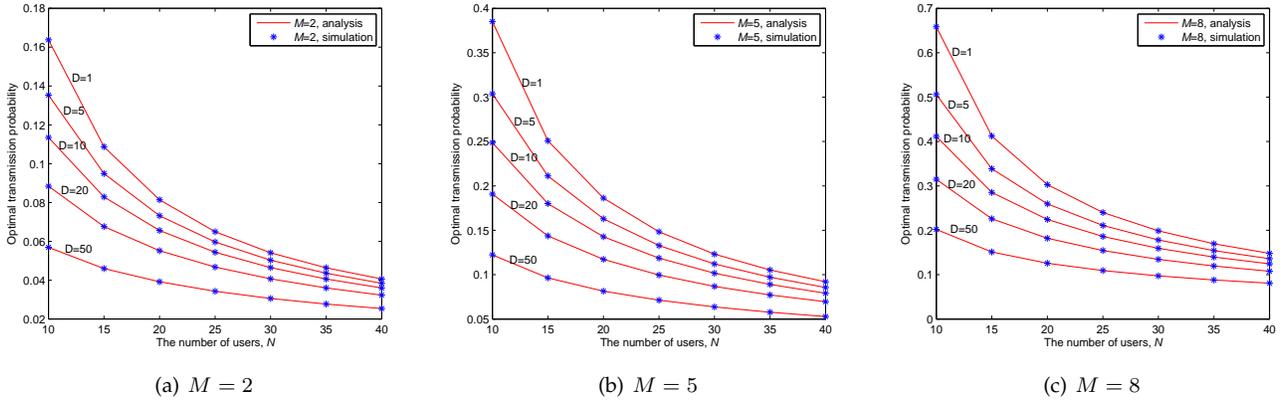

  \centering
  \subfigure[$M=2$]{
    \includegraphics[width=2.25in,height=2in]{tauM2.eps}}
  \subfigure[$M=5$]{
    \includegraphics[width=2.25in,height=2in]{tauM5.eps}}
  \subfigure[$M=8$]{
    \includegraphics[width=2.25in,height=2in]{tauM8.eps}}
  \caption{The optimal transmission probability as a function of $N$ for different values of $M$ and $D$.}
  \label{fig:tau} %% label for entire figure
\end{figure*}

\begin{figure*}
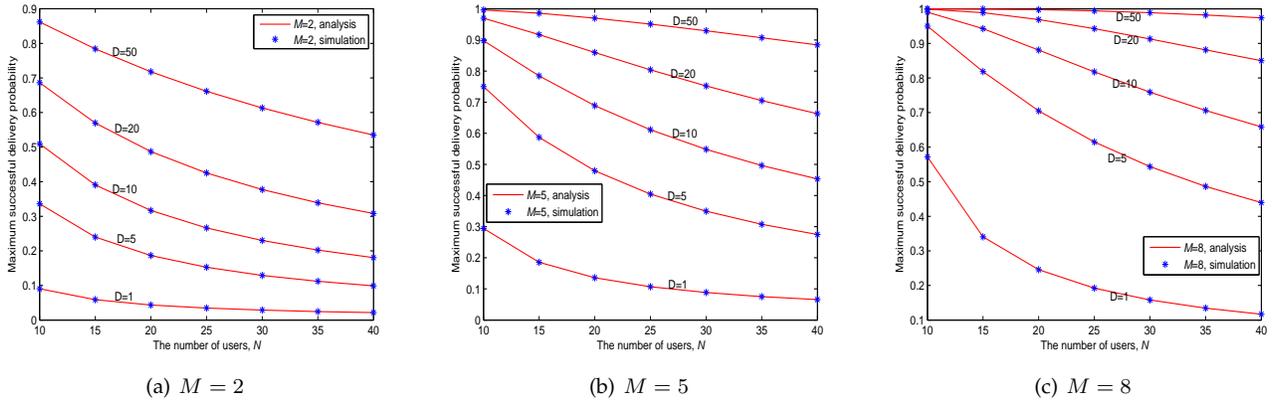

  \centering
  \subfigure[$M=2$]{
    \label{fig:subfig:a} %% label for first subfigure
    \includegraphics[width=2.25in,height=2in]{PsM2.eps}}
  \subfigure[$M=5$]{
    \label{fig:subfig:b} %% label for second subfigure
    \includegraphics[width=2.25in,height=2in]{PsM5.eps}}
  \subfigure[$M=8$]{
    \label{fig:subfig:b} %% label for second subfigure
    \includegraphics[width=2.25in,height=2in]{PsM8.eps}}
  \caption{The maximum SDP as a function of $N$ for different values of $M$ and $D$.}
  \label{fig:Ps} %% label for entire figure
\end{figure*}
%First, we consider the cases in which there is no throughput constraint.
Fig.~\ref{fig:tau} and Fig.~\ref{fig:Ps} show the optimal transmission probability and the maximum SDP as a function of $N$ for different values of $M$ and $D$, respectively.
We see a good agreement between analytical and simulation results in all the scenarios.
Notice that the results for $D=1$ can be seen as the throughput maximization issue investigated in~\cite{BaeMPR14}.
In Fig.~\ref{fig:tau}, as expected, we observe that the optimal transmission probability becomes smaller when $N$ increases.
This is because that more users attempt to access the channel and the contention level becomes severer.
We also see the optimal transmission probability becomes larger when $M$ increases or $D$ decreases.
The reason is that a user needs to be more aggressive in accessing the channel if more concurrent packets can be successfully received or the user wants to successfully send out a packet within a shorter delivery deadline.
In Fig.~\ref{fig:Ps}, as expected, the curves show that a smaller $N$, a larger $M$ or a larger $D$ leads to a larger value of the maximum SDP.
In particular, we note that the optimal transmission probability for $D>1$ is smaller than that for $D=1$ for given $N$ and $M$.
This phenomenon indicates that the throughput maximization degrades the SDP for $D>1$, and in turn the maximization of SDP for $D>1$ degrades the throughput performance.

\begin{figure*}
  \centering
  \subfigure[$L=50000, \delta=0.7$]{
    \includegraphics[width=3.5in,height=2.5in]{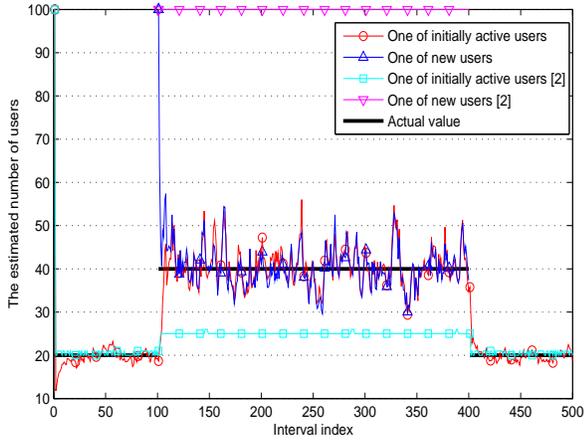}}
  \subfigure[$L=20000, \delta=0.9$]{
    \includegraphics[width=3.5in,height=2.5in]{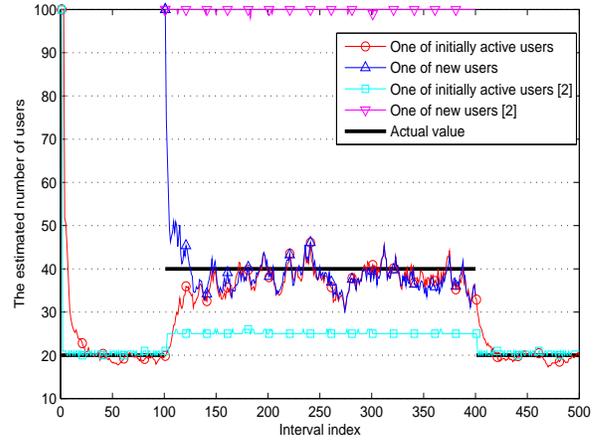}}
  \caption{The estimated number of users for $M=5$ and $D=1$ when the number of users sharply changes.}
  \label{fig:dynamicNe} %% label for entire figure
\end{figure*}

\begin{figure*}
  \centering
  \subfigure[$L=50000, \delta=0.7$]{
    \includegraphics[width=3.5in,height=2.5in]{bsingleD1M5L05I500d07q1iM2.eps}}
  \subfigure[$L=20000, \delta=0.9$]{
    \includegraphics[width=3.5in,height=2.5in]{bsingleD1M5L02I500d09q1iM2.eps}}
  \caption{Individual SDP for $M=5$ and $D=1$ when the number of users sharply changes.}
  \label{fig:dynamicPs} %% label for entire figure
\end{figure*}

\begin{table*}
\[
\begin{array}{|c|c|c|c|c|} \hline
\text{Tuning parameters} & \text{Interval index} & \text{Theoretical maximum value} & \text{Mean in our simulation} &  \text{Variance in our simulation} \\ \hline
\multirow{3}{1.5cm}{$L=50000$, $\delta=0.7$}& 1-100  & 0.1357 & 0.1328 & 0.0012 \\ \cline{2-5}
& 101-400 & 0.0656 & 0.0646 & 5.54\cdot10^{-4} \\ \cline{2-5}
& 401-500 & 0.1357 & 0.1344   & 6.80\cdot10^{-4} \\ \cline{2-5} \hline
\multirow{3}{1.5cm}{$L=20000$, $\delta=0.9$}& 1-100 & 0.1357 & 0.1290 & 0.0027  \\ \cline{2-5}
& 101-400 & 0.0656 & 0.0647 & 0.0016\\ \cline{2-5}
& 401-500 & 0.1357 & 0.1337 & 0.0010  \\ \cline{2-5} \hline
\end{array}
\]
\caption{Mean and variance of individual SDP among all active users at different stages for $M=5$ and $D=1$.}
\label{table:ave}
\end{table*}

\subsection{Dynamic Scenarios when $N$ is unknown}
By setting $i_2=M$ and $i_1=2$ in~\eqref{eq:N}, we analyze the performance of the proposed tuning algorithm when the number of users sharply changes.
In detail, 20 users are always active from 1st to 500th updated interval, and 20 new users are active only from the 101st updated interval to the 400th updated interval.
Each user knows there are at most $N_{max}=100$ users, and initially guesses there are $N_0=100$ users.
%To evaluate the convergence and the sensitivity, in terms of how fast the algorithm reacts to changes, different tuning parameters are considered.
%As a lager $i_2$ causes higher hardware cost, the impact of $i_2$ on the performance will be discussed later.

\subsubsection{The cases with $D=1$}
We start by discussing the performance for $D=1$.
%As the SDP for $D=1$ is identical to the individual throughput, we do not consider throughput constraint here.
Fig.~\ref{fig:dynamicNe} and Fig.~\ref{fig:dynamicPs} show the estimated number of users and SDP for $M=5$ with different $L$ and $\delta$, respectively.
To characterize individual behavior and improve the readability, we only present the real trajectories of two representative users in the same run.

We observe from Fig.~\ref{fig:dynamicNe} that, in all the cases, when the actual number of users is changed to 20, the estimate rapidly adapts to changes and keeps a high level of accuracy; when the actual number of users is changed to 40, the estimate is still able to capture changes within a few update intervals, but the slop increases, i.e, the accuracy of the estimation degrades.
This phenomenon is because that as the actual number of users increases, the fluctuation in measuring $\overline{\mu}_n$ is amplified in estimation by using~\eqref{eq:N}.

As expected, we observe from Fig.~\ref{fig:dynamicPs} that the users achieve SDPs close to the theoretical limit at runtime by tuning the transmission probability according to the estimated number of users as shown in Fig.~\ref{fig:dynamicNe}.
Moreover, even when the oscillation of estimation is high, we find that the SDP still keeps relatively stable.
This phenomenon is due to the fact that the SDP is less sensitive to the change in transmission probability as the number of users increases.

To evaluate the fairness of the proposed algorithm, we further study mean and variance of SDP among all active users at different stages.
As shown in Table~\ref{table:ave}, we find the mean value is very close to the theoretical maximum value for each stage with a tolerance of at most 5\%, and the variance value is also very small.
This result indicates that the proposed algorithm can enable every active user to achieve near-optimal performance in dynamic scenarios.
Notice that the transient period is a dominant factor to degrade the performance.

As the case with $D=1$ can be viewed as runtime optimization of throughput, to show the proposed algorithm is more effective in a variety of dynamic scenarios, we consider the approach in~\cite{BaeMPR14} for comparison purposes.
The method therein recursively updates the transmission probability with an estimated $N$, and then uses the measure of $\frac{\mathbb{E}[X^2]}{\mathbb{E}[X]}$ ($X$ is the number of users involved in each successful transmission slot) to estimate new $N$ necessarily relying on the adopted transmission probability.
As shown in Fig.~\ref{fig:dynamicNe} and Fig.~\ref{fig:dynamicPs}, one sees their estimate keeps a very high level of accuracy when 20 new users are inactive, but is not able to keep track of the change well when 20 new users are active.
This phenomenon is because when two groups of users become active at different time instants, they may adopt different transmission probabilities, and then have biased estimate of $N$.
Whereas, our algorithm does not require any access parameter input, and hence would not lead to biased estimate in such a scenario.

%whereas our method does not require any access parameter input.

\subsubsection{The cases with $D>1$}
%In the following, we will investigate the performance for $D>1$ subject to throughput constraint.
Fig.~\ref{fig:dynamicNeD20} shows the real trajectories of the estimated number of users of two representative users for $M=5$ and $D=20$.
Similar to the case of $D=1$ as shown in Fig.~\ref{fig:dynamicNe}, the estimate is able to capture changes within a few update intervals, but the accuracy degrades when the actual number of users increases.

Fig.~\ref{fig:dynamicPsD20} shows the real trajectories of individual SDPs of two representative users for $M=5$ and $D=20$.
We observe that both users are able to adapt to different changes in the number of users, and achieve SDPs close to the
theoretical limit.
In addition, to evaluate the fairness issue, we in Table~\ref{table:aveD20} compare the mean and variance of individual SDP among all active users against the theoretical maximum SDP for $D=20$.
We confirm that the average performance of each user is near-optimal at every stage.

\begin{figure*}
  \centering
  \subfigure[$L=50000, \delta=0.7$]{
    \includegraphics[width=3.5in,height=2.5in]{bsingleD20M5L05I500d07q1iM2Ne.eps}}
  \subfigure[$L=20000, \delta=0.9$]{
    \includegraphics[width=3.5in,height=2.5in]{bsingleD20M5L02I500d09q1iM2Ne.eps}}
  \caption{The estimated number of users for $M=5$ and $D=20$ when the number of users sharply changes.}
  \label{fig:dynamicNeD20} %% label for entire figure
\end{figure*}

\begin{figure*}
  \centering
  \subfigure[$L=50000, \delta=0.7$]{
    \includegraphics[width=3.5in,height=2.5in]{bsingleD20M5L05I500d07q1iM2e0.eps}}
  \subfigure[$L=20000, \delta=0.9$]{
    \includegraphics[width=3.5in,height=2.5in]{bsingleD20M5L02I500d09q1iM2e0.eps}}
  \caption{Individual SDP for $M=5$ and $D=20$ when the number of users sharply changes.}
  \label{fig:dynamicPsD20} %% label for entire figure
\end{figure*}

\begin{table*}
\[
\begin{array}{|c|c|c|c|c|} \hline
\text{Tuning parameters} & \text{Interval index} & \text{Theoretical maximum value} & \text{Mean in our simulation} &  \text{Variance in our simulation} \\ \hline
\multirow{3}{1.5cm}{$L=50000$, $\delta=0.7$}& 1-100  & 0.8595 & 0.8525  & 0.0017 \\ \cline{2-5}
& 101-400 & 0.6628 & 0.6578 & 0.0028 \\ \cline{2-5}
& 401-500 & 0.8595 & 0.8558 & 0.0011 \\ \cline{2-5} \hline
\multirow{3}{1.5cm}{$L=20000$, $\delta=0.9$}& 1-100 & 0.8595 & 0.8547 & 0.0053  \\ \cline{2-5}
& 101-400 & 0.6628 & 0.6579 & 0.0054\\ \cline{2-5}
& 401-500 & 0.8595 & 0.8545 & 0.0022  \\ \cline{2-5} \hline
\end{array}
\]
\caption{Mean and variance of individual SDP among all active users at different stages for $M=5$ and $D=20$.}
\label{table:aveD20}
\end{table*}

\section{Conclusion}
In this paper, we investigated the impact of MPR capability $M$ and delivery deadline $D$ on the optimal transmission probability that maximizes the SDP of the slotted ALOHA protocol in a communication channel shared by $N$ users with saturated traffic.
We have obtained the optimal transmission probability for any $1 \leq M <N$ and any $D\geq 1$, and shown it can be computed by a fixed-point iteration.
Then, we developed an adaptive tuning algorithm for maximizing the SDP when $N$ is unknown and time-varying.
Simulation results show the proposed algorithm is effective in a variety of dynamic scenarios.

As a special case of our work, the maximization of the SDP when $D=1$ can be viewed as the maximization of individual throughput.
It should be pointed out that \cite{BaeMPR14} is a first attempt to deal with this issue, however, the optimal transmission probability therein was obtained by necessarily assuming $\frac{\dd}{\dd\tau} T(\tau) < N-1$ for any $1 \leq M <N$ in the interval $0 < \tau < 1$.
Therefore, the throughput maximization for an $M$-user MPR channel is first completely addressed in this paper.

\appendix
\noindent \emph{Proof of \eqref{eq:dg0}.}
~~We first have
\begin{align*}
\frac{\dd}{\dd x}g(x) = x\frac{\dd H_1(x)}{\dd x} +& \frac{H_1(x)+1}{(H_2(x)+1)^2} \\
\cdot&\Big(1+H_2(x)+x\frac{\dd H_2(x)}{\dd x}\Big).
\end{align*}
Since, for $x\in(0,1)$, $\frac{\dd}{\dd x}H_1(x)>0$ by Lemma~\ref{lemma:H1} and $H_1(x)>0$ by definition, to prove $\frac{\dd}{\dd x}g(x)>0$ it suffices to show that
\begin{align}
1+H_2(x)+x\frac{\dd H_2(x)}{\dd x} =& 1-\frac{D^2x^2(1-x)^{D-1}}{(1-(1-x)^D)^2} \notag \\
 >& 0, \label{eq:dg0_suffi}
\end{align}
for $x\in(0,1)$.

Let $W(x):=\frac{D^2x^2(1-x)^{D-1}}{(1-(1-x)^D)^2}$.
Observe that
\begin{align*}
\frac{\dd}{\dd x}W(x) = \frac{2D^2x(1-x)^{D-2}}{(1-(1-x)^D)^3}\Big( &(1-Dx)\big(1-(1-x)^D\big) \notag \\
&- x(1-x)^D \Big).
%=& \frac{2D^2x(1-x)^{D-2}}{(1-(1-x)^D)^3} \Big( 1-Dx-(1-Dx+x)(1-x)^D \Big) \notag \\
%<& \frac{2D^2x(1-x)^{D-2}}{(1-(1-x)^D)^3} \Big( 1-Dx-(1-Dx+x) \Big) \\
%=& \frac{2D^2x(1-x)^{D-2}}{(1-(1-x)^D)^3} (-x^2) <0. \notag
\end{align*}
Since
\begin{align}
& ~1-Dx-(1-Dx+x)(1-x)^D \notag \\
=& ~1-Dx-(1-Dx+x)(1-x)^D \notag \\
<& ~1-Dx-(1-Dx+x) \label{eq:dg0_suffi_1} \\
=& ~-x < 0, \notag
\end{align}
where \eqref{eq:dg0_suffi_1} is due to $0<(1-x)^D<1$, we have $\frac{\dd}{\dd x}W(x)<0$ for $x\in(0,1)$, i.e., $W(x)$ is decreasing.
Moreover, by the L'Hospital's Rule we have
\begin{align*}
\lim_{x\to 0^+} W(x) = 1.
\end{align*}
This concludes that $W(x)<1$ for $x\in(0,1)$, which implies \eqref{eq:dg0_suffi}.
Hence we complete the proof.

\section*{Acknowledgements}
This work was partially supported by the National Natural Science Foundation of China (No. 11601454, 61472190, 61501238), the Open Research Fund of National Mobile Communications Research Laboratory, Southeast University (No. 2017D09, 2017D04), and the Natural Science Foundation of Fujian Province of China (No. 2016J05021).

\end{document}